\definecolor{cornellred}{rgb}{0.7, 0.11, 0.11}
\newtheoremstyle{break}
  {\topsep}{\topsep}%
  {\itshape}{}%
  {\bfseries}{}%
  {\newline}{}%
\theoremstyle{break}
\newtheorem{theorem}{Theorem}
\newtheorem{lemma}{Lemma}
\newtheorem{assumption}{A}
\theoremstyle{definition}
\DeclareMathOperator{\spargel}{sp}
\DeclareMathOperator{\diag}{diag}
\DeclareMathOperator{\E}{\mathbb E}
\DeclareMathOperator{\V}{\mathbb V}
\DeclareMathOperator{\proj}{proj}
\newcommand\utimes{\mathbin{\ooalign{$\cup$\cr%
   \hfil\raise0.42ex\hbox{$\scriptscriptstyle\times$}\hfil\cr}}}
\newcommand\bigutimes{\mathop{\ooalign{$\bigcup$\cr%
   \hfil\raise0.36ex\hbox{$\scriptscriptstyle\boldsymbol{\times}$}\hfil\cr}}}
\renewenvironment{proof}[1][\proofname]{%
  \par\pushQED{\qed}\normalfont%
  \topsep6\p@\@plus6\p@\relax
  \trivlist\item[\hskip\labelsep\bfseries#1\@addpunct{.}]%
  \ignorespaces
}{%
  \popQED\endtrivlist\@endpefalse
}
\newcommand\norm[1]{\left\lVert#1\right\rVert}
\newcommand{\bianca}{\renewcommand\NAT@open{[}\renewcommand\NAT@close{]}}
\providecommand{\keywords}[1]{\textbf{\textit{Index terms---}} #1}
\providecommand{\keywords}[1]{\textbf{\textit{Index terms---}} #1}
\newcommand\ubar[1]{\stackunder[1.2pt]{$#1$}{\rule{.8ex}{.075ex}}}
\begin{document}
%
%
\title{
Actually, There is No Rotational Indeterminacy in the Approximate Factor Model
}
\author{
\textsc{Philipp Gersing}\footnote{Department of Statistics and Operations Research, University of Vienna}}
\maketitle
\begin{abstract}
We show that in the approximate factor model the population normalised principal components converge in mean square (up to sign) under the standard assumptions for $n\to \infty$. Consequently, we have a generic interpretation of what the principal components estimator is actually identifying and existing results on factor identification are reinforced and refined. Based on this result, we provide a new asymptotic theory for the approximate factor model entirely without rotation matrices. We show that the factors space is consistently estimated with finite $T$ for $n\to \infty$ while consistency of the factors a.k.a the $L^2$ limit of the normalised principal components requires that both $(n, T)\to \infty$. 
%
%
%
%
%
%
%
%
%
\end{abstract}
\keywords{Approximate Factor Model}
%
%
%
%
%
%
\section{Introduction}
Let $(y_{it}: i \in \mathbb N, t \in \mathbb Z) \equiv (y_{it})$ be a zero mean stationary stochastic process indexed in time $t$ and cross-section $i$. The approximate factor model introduced by \cite{chamberlain1983arbitrage, chamberlain1983funds, stock2002forecasting, stock2002macroeconomic, bai2002determining} relies on a decomposition of the form
\begin{align}
    y_{it} &= C_{it} + e_{it} = \Lambda_i F_t + e_{it}, \label{eq: r-SFM} \\
    y_t^n &= C_t^n + e_t^n = \Lambda^n F_t + e_t^n \quad \mbox{(vector representation)} \label{eq: r-SFM vec rep}
\end{align}
where $F_t$ are factors of dimension $r$ (usually small), the $\Lambda_i$'s are pervasive loadings with $\Lambda^n = (\Lambda_1', ..., \Lambda_n')'$, $C_{it}$ is the ``common component'' and $e_{it}$ is the contemporaneously weakly correlated ``idiosyncratic component''. We consider a setup where the loadings $\Lambda_i$ are ``deterministic parameters''. 

Clearly, the ``true factors'' are latent and can only be identified up to a non-singular transformation. A classical result \citep[see][]{stock2002forecasting, bai2002determining, bai2003inferential, barigozzi2022estimation}, what we may call ``convergence to the factor space'' is 
\begin{align}
    \norm{\hat F_t^n - \hat H_n F_t} \overset{P}{\to} 0 \quad \mbox{for} \ n, T \to \infty, \label{eq: convergence with hat H}
\end{align}
while $\hat F_t^n$ denotes factors estimated by the normalised (by square root for the sample eigenvalues) principal components (obtained from the sample variance matrix) $\hat H_n$ is a non-singular transformation depending on $n$. Alternatively, there is also a result where $H_n$ is deterministic \citep[see][]{barigozzi2022estimation}.
%

Next, \cite{bai2013principal} assume the identifying restrictions 
\begin{itemize}
    \item[(i)] $(\Lambda^n)' \Lambda^n \in \mathcal D(r)$ is diagonal for all $n \in \mathbb N$ 
    \item[(ii)] $T^{-1}\sum_{t = 1}^T F_t F_t' = I_r$
\end{itemize}
To maintain the property $(\Lambda^n)' \Lambda^n$ is diagonal \textit{for all} $n\in \mathbb N$ the loadings $\Lambda_i$ associated with a fixed $i$ need to change with every $n$ in general: For instance, let $\lambda_{ij}$ be the $i,j$ entry of the loadings matrix and $\lambda_j^n$ be the $j$-th column, then in the following case $(\Lambda^2)'\Lambda^2$ is diagonal, but for
\begin{align*}
\Lambda^3 &= \begin{pmatrix}
        1 & 0 \\
        0 & 1 \\
\lambda_{31} & \lambda_{32}
    \end{pmatrix}
    \quad \mbox{then} \ (\Lambda^3)'\Lambda^3 \ \mbox{is diagonal only if} \ 
    (\lambda_1^3)' \lambda_2^3 = 0 \\[0.8em]
    &\mbox{which requires} \ \lambda_{31} \lambda_{32} = 0.
\end{align*}
So under the restrictions $(i), (ii)$ above, without allowing $\Lambda_i$ to depend on $n$, it would be too restrictive to demand orthogonality in the loadings columns for every $n\in \mathbb N$. The same holds for condition (ii) - every trajectory is associated with a another factor representation. As is shown in \cite{bai2013principal, barigozzi2022estimation} the identification restrictions above imply that under standard assumptions  
\begin{align}
    \norm{\hat F_t^n - \hat F_t^{n,C}} \overset{P}{\to} 0 \quad \mbox{for} \ n, T \to \infty, \label{eq: convergence to hat NPC of C}
\end{align}
where $\hat F_t^{n, C}$ are the normalised sample principal components of $C_t^n$ for every $n$. We can be shown to converge to $F_t^{n, C}$ the population principal components in probability for $T\to \infty$. However, nothing has been said about whether $\hat F_t^{n,C}$ or $F_t^{n,C}$ converges or not. It is well known that in general, the principal components of a multivariate vector $y_t^n$ or $C_t^n$ change with $n$, since adding new variables changes the principal directions of the data. Therefore we conclude that (\ref{eq: convergence to hat NPC of C}) establishes \textit{convergence to sequence of factor representations} - not to a specific factor representation. Often we wish to interpret the loadings structurally, e.g. achieve a representation in which $\Lambda^n$ is sparse. In an asymptotic setup which requires $n \to \infty$ for identification, convergence to a sequence of representations is not appropriate for interpretations of the loadings, like e.g. sparsity: After all, as shown above if $\Lambda^n$ is sparse under restrictions (i), (ii), then $\Lambda^{n+1}$ in general is not sparse.

Finally \cite{bai2020simpler} prove for $\hat H_n$ in (\ref{eq: convergence with hat H}) that $\hat H_n \overset{P}{\to} H_0$. This is wonderful as combining this with (\ref{eq: convergence with hat H}) it readily implies that
\begin{align}
    \norm{\hat F_t^n - H_0 F_t} \leq  \underbrace{\norm{\hat F_t^n - \hat H_n F_t}}_{\overset{P}{\to} 0} + \underbrace{\norm{\hat H_n - H_0}}_{\overset{P}{\to} 0}\underbrace{\norm{F_t}}_{\mathcal O_p(1)}\overset{P}{\to} 0 \quad \mbox{for} \ n, T \to \infty, \label{eq: convergence to factors with H_0}
\end{align}
where here the factor representation $F_t$ does depend on $n$ as in (\ref{eq: r-SFM}). Therefore the sample normalised principal components $\hat F_t^n$ have a probability limit for $(n, T)\to \infty$, which is $H_0 F_t$. We may say that (\ref{eq: convergence to factors with H_0}) establishes \textit{convergence to a specific factor representation} in probability.  

The contribution of this paper is to extend/refine this result by proving that under suitable (standard conditions) the normalised population principal components of $C_t^n$ and $y_t^n$, say $F_t^{n, C}$ and $F_t^{n, y}$ converge in mean square for $n\to \infty$ to what we may call \textit{the normalised principal components of the statically common space}, say $F_t^\infty$, i.e. 
\begin{align}
    \norm{F_t^{n, y}- F_t^\infty} = \mathcal O_{ms}(n^{-1/2}) \quad \mbox{and} \quad  \norm{F_t^{n, C}- F_t^\infty} = \mathcal O_{ms}(n^{-1/2}). \label{eq: L2 convergence of NPCs}
\end{align}
- a representation independent of $n$. On the filp side also the associated loadings $\Lambda_i^\infty$ are independent of $n$. 

This result paves the way for a simpler asymptotic theory, presented in the second part of the paper, that operates without ``rotation matrices'' like $\hat H_n$ or $H_n$ and establishes convergence to factors rather than a sequence of factors. The proofs are approached like in \cite{forni2004generalized} using perturbation theory from \cite{wilkinson1965algebraic}. Consequently, we have a generic interpretation for what the principal components estimator is actually targeting and structural interpretations on the estimated loadings, like e.g. sparsity are justified. Furthermore it also provides meaningful interpretations for regression outputs in factor-augmented regressions \citep[see][]{bernanke2005measuring, bai2006confidence}, as we have a meaningful interpretation of each of the factors. In particular we show that
\begin{align*}
    \norm{\hat F_t^n - F_t^\infty} \leq \underbrace{\norm{\hat F_t^n - F_t^{n, y}}}_{\mathcal O_{ms}(T^{-1/2})} + \underbrace{\norm{F_t^{n, y} - F_t^\infty}}_{\mathcal O_{ms}(n^{-1/2})} = \mathcal O_{ms}\left(\max(T^{-1/2}, n^{-1/2})\right), 
\end{align*}
which clarifies the different roles of $T$ and $n$ in the identification. While $T\to\infty$ is needed to estimate the normalised principal components, $n\to\infty$ is needed to average out the idiosyncratic terms.

Finally, we also clarify why consistency of the factor space \textit{only requires} $n\to \infty$, i.e. we show
\begin{align*}
    \norm{\hat F_t^n - \hat H_n F_t} = \mathcal O_{ms}(n^{-1/2}) \quad \mbox{for finite} \ T < \infty.
\end{align*}
Convergence of the factor space under finite $T <\infty$ has also been studied and proved in \cite{zaffaroni2019factor, fortin2023eigenvalue, onatski2023comment} under slightly different assumptions and the earlier in \cite{connor1986performance, bai2003inferential}. The proof presented in this paper reveals how and why this result seamlessly fits with the approximate factor model framework: To identify the factor space, we only need to average out the idiosyncratic term by cross-sectional aggregation \citep[see][]{gersing2023reconciling, gersing2023weak}. If we have $n = \infty$, the factor space is obtained whenever, we can find cross-sectional averages that result in $r$ linearly independent factors. This is satisfied even for very imprecisely estimated eigenvectors in the case of finite $T$.   
%
%

    %
    %
%
%
%
%
\section{General Setup}
We suppose that the observed process $(y_{it})$ has the representation as in (\ref{eq: r-SFM}), (\ref{eq: r-SFM vec rep}). Let $A$ be a symmetric $n\times n$ matrix. Denote by $\mu_j(A)$ the $j$-th largest eigenvalue of a matrix $A$ and set $M(A) \equiv \diag(\mu_1(A), ..., \mu_r(A))$. Denote by $P(A)$ the $r\times n$ matrix consisting of the first orthonormal row eigenvectors (corresponding to the $r$ largest eigenvalues of $A$) and by $P^i(A)$ the $1\times r$ vector consisting of the entries of the $i$-th row of $P(A)'$. Let $\E y_t^n (y_t^n)' \equiv \Gamma_y^n$ with $y_t^n = (y_{1t}, ..., y_{nt})'$.
The normalised principal components of $y_t^n$ are 
\begin{align*}
    F_t^{n, y} &\equiv M^{-1/2}(\Gamma_y^n) P(\Gamma_y^n) y_t^n . \\
    F_t^{n, C} &\equiv M^{-1/2}(\Gamma_C^n) P(\Gamma_C^n) C_t^n 
\end{align*}
For simplicity we suppose that the loadings are such that $\E F_t F_t' = I_r$ which can always be achieved by a constant (independent of $n$) rotation matrix applied to the factors. Furthermore the following assumptions are made: 
\begin{assumption}[r-Static Factor Structure]\label{A: r-SFS struct}
There exists a natural number $r < \infty$, such that
\begin{itemize}
    \item[(i)] $\sup_n \mu_r(\Gamma_C^n) = \infty$ (pervasive loadings)
    \item[(ii)] $\sup_n \mu_1(\Gamma^n_e) < \infty$. 
\end{itemize}
\end{assumption}
Note that part $(ii)$ is also implied by assumptions made for the idiosyncratic component as in \cite{bai2002determining}. This is shown in \cite{barigozzi2022estimation}. Henceforth let us commence from a representation in terms of ``true factors'' which is such that $\E F_t F_t' = I_r$. The existence of such a representation is implied by A\ref{A: r-SFS struct} \citep[see][which uses the same proofs  as \cite{forni2001generalized} operating with variance matrices insted of spectral densities]{gersing2023reconciling}. It also follows from the classical assumptions of \cite{bai2002determining, bai2003inferential}. 
\begin{assumption}[Asymptotic Properties of the Loadings]\label{A: divergence rates eval}\ \\[-2.5em]
    \begin{itemize}
    \item[(i)] Normalised Representation: Without loss of generality we assume that $\E F_t F_t' = I_r$
    \item[(ii)] Convergence of the ``loadings variance'': Set $\frac{(\Lambda^n)'\Lambda^n}{n} = \frac{1}{n} \sum_{i = 1}^n \Lambda_i' \Lambda_i \equiv \Gamma_\Lambda^n$ and suppose that $\norm{\Gamma_\Lambda^n- \Gamma_\Lambda} = \mathcal O(n^{-1/2})$ for some $\Gamma_\Lambda > 0$.
    \item[(iii)] The eigenvalues of $\Gamma_\Lambda$ are distinct and contained in the diagonal matrix $D_\Lambda$ sorted from the largest to the smallest.
    \item[(iv)] Loadings and idiosyncratic variances are globally bounded, i.e. $\norm{\Lambda_i} < B_\Lambda < \infty$ and $\E e_{it}^2 < B_e < \infty$ for all $i\in \mathbb N$.  
    \end{itemize}
\end{assumption}
%
%
%
%
%
%
%
As we will see this is equivalent to supposing that $n^{-1}M(\Gamma_y^n) \to D_\Gamma = D_\Lambda$ are distinct. We can think of A\ref{A: divergence rates eval}(i) in the sense of a law of large numbers for the loadings, e.g. if the loadings/cross-sectional units are sampled in a first step from an IID distribution. The eigenvalues divided by the rate converge to diagonal matrix $D_\Gamma = D_\Lambda$, are asymptotically well separated. Convergence \textit{and} separation are both crucial for $L^2$ convergence of the normalised principal components. In the literature \citep[][]{bai2002determining, barigozzi2022estimation}, it is usually assumed that $n^{-1} (\Lambda^n)' \Lambda^n \to \Gamma_\Lambda > 0$ and $\Gamma_F > 0$ which implies the convergence of $n^{-1}M(\Gamma_C^n)$ as we will see below. Asymptotically separated eigenvalues assumed e.g. in \cite{forni2004generalized} (Assumption R, but without convergence) in \cite{bai2003inferential} (Assumption G) and in \cite{bai2013principal}. 
%
%
%
%
%
\section{The Convergence Result}
Clearly the normalised principal components of $F_t^{n, y}$ are determined only up to sign. To resolve this indeterminancy, lets assume (without loss of generality) that the first $r$ rows of $P'(\Gamma_y^n) M^{1/2}(\Gamma_y^n)$, call it $\Lambda_r(\Gamma_y^n)$ have full rank (from a certain $n$ onwards) and fix the diagonal elements of $\Lambda_r(\Gamma_y^n)$ to be always positive. This fixes the sign of the eigenvectors and the normlised principal components. 
\begin{lemma}\label{lem: evals-vecs Gammay GammaC}
    Under Assumption A\ref{A: r-SFS struct}, we have
    \begin{itemize}
        \item[(i)] $\norm{p_j(\Gamma_y^n) - p_j(\Gamma_C^n)} = \mathcal O(n^{-1/2})$, for $j = 1, ..., r$.
        \item[(ii)] $\mu_j(\Gamma_y^n) - \mu_j(\Gamma_C^n) = \mathcal O(1)$ for $1\leq j \leq r$. 
        \item[(iii)] $\frac{M(\Gamma_y^n)}{n}\to D_\Gamma, \frac{M(\Gamma_C^n)}{n}\to D_\Gamma$ for $n\to \infty$ and $|\frac{M(\Gamma_y^n)}{n} - \frac{M(\Gamma_C^n)}{n}| = \mathcal O(n^{-1})$
    \end{itemize}
\end{lemma}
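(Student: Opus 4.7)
My overall strategy is to view $\Gamma_y^n$ as a perturbation of $\Gamma_C^n$ whose additive noise $\Gamma_e^n$ has uniformly bounded operator norm, while the top $r$ eigenvalues of $\Gamma_C^n$ diverge linearly in $n$. Weyl's inequality then handles the eigenvalues, and the first-order perturbation formulae from \cite{wilkinson1965algebraic} (equivalently the Davis--Kahan $\sin\Theta$ theorem) handle the eigenvectors, exploiting the fact that the relevant eigenvalue gaps are $\Omega(n)$.

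I would first dispatch (ii). Under the standard factor-model orthogonality $\E F_t (e_t^n)' = 0$, one has $\Gamma_y^n = \Gamma_C^n + \Gamma_e^n$, and Weyl gives $|\mu_j(\Gamma_y^n) - \mu_j(\Gamma_C^n)| \leq \mu_1(\Gamma_e^n)$, which is $\mathcal O(1)$ by A\ref{A: r-SFS struct}(ii). For (iii), A\ref{A: divergence rates eval}(i) yields $\Gamma_C^n = \Lambda^n (\Lambda^n)'$, whose non-zero eigenvalues coincide with those of $(\Lambda^n)'\Lambda^n = n\,\Gamma_\Lambda^n$. Applying Weyl to A\ref{A: divergence rates eval}(ii) gives $|\mu_j(\Gamma_\Lambda^n) - \mu_j(\Gamma_\Lambda)| = \mathcal O(n^{-1/2})$, and since by A\ref{A: divergence rates eval}(iii) the eigenvalues of $\Gamma_\Lambda$ are collected (distinct) in $D_\Lambda = D_\Gamma$, this gives $M(\Gamma_C^n)/n \to D_\Gamma$. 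Combining with (ii) then yields $|M(\Gamma_y^n)/n - M(\Gamma_C^n)/n| \leq \mu_1(\Gamma_e^n)/n = \mathcal O(n^{-1})$ and hence also $M(\Gamma_y^n)/n \to D_\Gamma$, completing (iii).

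The heart of the lemma is (i). The crucial observation is that the gap around $\mu_j(\Gamma_C^n)$ for $1\leq j\leq r$ is of order $n$: by (iii) the limiting values are strictly separated under A\ref{A: divergence rates eval}(iii), and $\mu_{r+1}(\Gamma_C^n)=0$ because $\Gamma_C^n$ has rank at most $r$. With perturbation $\norm{\Gamma_e^n} = \mathcal O(1)$, Wilkinson's first-order expansion reads
\begin{align*}
p_j(\Gamma_y^n) - p_j(\Gamma_C^n) \;=\; \sum_{k\neq j} \frac{p_k(\Gamma_C^n)'\,\Gamma_e^n\, p_j(\Gamma_C^n)}{\mu_j(\Gamma_C^n) - \mu_k(\Gamma_C^n)}\, p_k(\Gamma_C^n) \;+\; \mathcal O(n^{-2}),
\end{align*}
and taking norms actually delivers $\norm{p_j(\Gamma_y^n) - p_j(\Gamma_C^n)} = \mathcal O(n^{-1})$, which is stronger than the stated $\mathcal O(n^{-1/2})$ bound.

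The main obstacle I anticipate is the sign/orientation alignment: the unit eigenvectors on either side of (i) are defined only up to a sign, so the claim becomes meaningful only after fixing compatible conventions. I would apply the sign convention introduced just above the lemma (positive diagonal of $\Lambda_r(\,\cdot\,)$) jointly to $\Gamma_y^n$ and $\Gamma_C^n$ and verify, via the explicit first-order expansion above, that for $n$ large enough the two conventions select the \emph{same} orientation so that no sign flip contaminates the $\mathcal O(n^{-1})$ bound. A secondary technicality is establishing uniform positivity of the gap $\min_{k\neq j}|\mu_j(\Gamma_C^n) - \mu_k(\Gamma_C^n)|/n$, which is a direct consequence of A\ref{A: divergence rates eval}(iii) together with part (iii), once $n \geq n_0$ for some $n_0$.
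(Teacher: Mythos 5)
Your proposal is correct and rests on the same core idea as the paper: treat $\Gamma_y^n=\Gamma_C^n+\Gamma_e^n$ as an $\mathcal O(1)$ perturbation of $\Gamma_C^n$ and exploit that all relevant eigenvalue gaps for $j\leq r$ are of order $n$ (using A\ref{A: divergence rates eval}, which the paper's own proof also invokes despite the lemma citing only A\ref{A: r-SFS struct}). Two differences are worth noting. First, for (ii) and (iii) you use Weyl's inequality where the paper uses Wilkinson's first-order eigenvalue expansion; Weyl is cleaner since it needs no appeal to negligibility of higher-order terms. Second, for (i) your bookkeeping is tighter: bounding $\sum_{l\neq j}\bigl(p_l\Gamma_e^n p_j'\bigr)^2=\norm{\Gamma_e^n p_j'}^2\leq \mu_1(\Gamma_e^n)^2=\mathcal O(1)$ via completeness of the eigenbasis, and dividing by gaps of order $n$, gives $\norm{p_j(\Gamma_y^n)-p_j(\Gamma_C^n)}=\mathcal O(n^{-1})$, whereas the paper bounds each of the $n$ summands separately by $\mathcal O(1)$ and therefore only obtains $\mathcal O(n^{-1/2})$; both deliver the stated rate, but your version is strictly sharper (and consistent with a Davis--Kahan bound $\norm{\Gamma_e^n}/\mathrm{gap}=\mathcal O(n^{-1})$). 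Your attention to the sign convention is appropriate and matches the normalisation the paper fixes just before the lemma.
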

\begin{proof}
    (i) Consider for any fixed $n$ a symmetric positive semi-definite matrix $\Gamma^n$. Suppose we perturb $\Gamma^n$ such that $\hat \Gamma^n = \Gamma^n + \varepsilon_n B^n$, where the entries of $B^n$ are in modulus smaller than one. Let $p_l \equiv p_l(\Gamma^n), \mu_j \equiv \mu_j(\Gamma^n)$ for $l = 1,..., n$, suppressing dependence on $n$ in the notation. From \cite{wilkinson1965algebraic}, section 2,(10.2) we obtain the expansion that if $\varepsilon_n$ is sufficiently small, that 
    \begin{align*}
        p_j(\hat \Gamma^n) - p_j(\Gamma^n) &= \sum_{l \neq j, l = 1}^n \frac{p_l\left(\hat \Gamma^n - \Gamma^n\right)p_j'}{\mu_j - \mu_l} p_l \ \varepsilon_n + \cdots \nonumber \\
        \mbox{therefore} \quad  \norm{p_j(\hat \Gamma^n) - p_j(\Gamma^n)}^2 &= \sum_{l \neq j, l = 1}^n \left(\frac{p_l\left(\hat \Gamma^n - \Gamma^n\right)p_j'}{\mu_j - \mu_l} \right)^2 \varepsilon_n^2 + \cdots, 
    \end{align*}
where the higher order terms are smaller than the first order terms and can be neglected and the second equation follows by orthogonality of the eigenvectors and while the higher order terms are negligent.

Apply this to $\hat \Gamma^n = \Gamma_y^n$ and $\Gamma^n = \Gamma_C^n$, so $p_l\equiv p_l(\Gamma_y^n), \mu_j \equiv \mu_j(\Gamma_y^n)$, we obtain 
\begin{align*}
    \norm{p_j(\Gamma_y^n) - p_j(\Gamma_C^n)}^2 &= \sum_{l \neq j, l = 1}^n \left(\frac{p_l \Gamma_e^n p_j'}{\mu_j - \mu_l} \right)^2 \varepsilon_n^2 + \cdots \\[1em] 
    \sum_{l \neq j, l = 1}^n \left(\frac{p_l \Gamma_e^n p_j'}{\mu_j - \mu_l} \right)^2 & \leq  \left(\sup_n \mu_1(\Gamma_e^n)\right)^2 \sum_{l \neq j, l = 1}^n  \sup_{l\leq n} \left(\frac{1}{\mu_j - \mu_l}\right)^2   \\
    & \leq n^{-2} \ n B_e^2 \ \sup_{l\leq n} \left(\frac{1}{n^{-1} (\mu_j - \mu_l)}\right)^2  \quad \mbox{by A\ref{A: r-SFS struct}}  \\
    & \leq n^{-2} \ n  B_e^2 \ \max_{l\neq j, 1\leq l, j \leq r}\left(\frac{1}{c_l^-  - c_j^+}\right)^2 = \mathcal O(n^{-1})   \quad \mbox{by A\ref{A: divergence rates eval},(i)},
\end{align*}
where the last inequality holds for $n$ large enough. 

(ii) For the eigenvalues, we use the expansion from \cite{wilkinson1965algebraic}, section 2, (5.5) with coefficients $k_1, k_2, ...$ and $\varepsilon = \mathcal O(1)$ such that 
\begin{align*}
    &\mu_1(\Gamma_y^n) - \mu_1(\Gamma_C^n) = k_1 \varepsilon + k_2 \varepsilon^2 + \cdots\\[0.5em]
    &k_1 = p_1(\Gamma_C^n) \left(\Gamma_y^n - \Gamma_C^n\right) p_1' (\Gamma_C^n)  = p_1(\Gamma_C^n) \left(\Gamma_e^n\right) p_1' (\Gamma_C^n) \leq \mu_1(\Gamma_e^n) = \mathcal O(1)
\end{align*}
where the higher order terms are negligent.

(iii) It readily follows with $\mu_j(\Gamma_C^n) = \mu_j\left(\Lambda^n (\Lambda^n)'\right) = \mathcal O(n)$ and under A\ref{A: divergence rates eval}, that $M(\Gamma_y^n) / n \to D_\Gamma$. 
\end{proof}

The normalised static principal components (NSPC) of any random vector of dimension $n$, supposing that the eigenvalues are different, are unique up to a change in sign. Suppose now, we fix the sign. Clearly, the NSPCs do in general not converge if $n$ is successively increased, since a new cross-sectional unit may always alter the direction of the principal components. However, in the case of the approximate factor model and under the condition of asymptotic stability of the first $r$-eigenvalues (which implies that they are well separated) convergence is ensured.

Consequently, under these assumptions the limit of the NSPC have a meaningful interpretation as ``normalised principal components of the static aggregation space''. The limit of the first NSPC is the ``strongest direction'' in the static aggregation space, the limit of the second NSPC is the ``strongest direction orthogonal to the first'' and so on... 
\begin{theorem}
Under Assumptions A\ref{A: r-SFS struct} and A\ref{A: divergence rates eval}, there exists a mean square limit of the normalised principal components, say $F_t^\infty$.
\begin{itemize}
    \item[(i)] $\norm{M^{-1/2}(\Gamma_y^n) P(\Gamma_y^n)y_t^n - F_t^\infty} = \mathcal O_{ms}(n^{-1/2})$
    \item[(ii)] $\norm{M^{-1/2}(\Gamma_C^n) P(\Gamma_C^n) C_t^n - F_t^\infty} = \mathcal O_{ms}(n^{-1/2})$.
\end{itemize}
\end{theorem}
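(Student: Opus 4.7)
The plan is to establish part (ii) first via a direct algebraic identity that reduces $F_t^{n,C}$ to a rotation of $F_t$ by the eigenvector matrix of $\Gamma_\Lambda^n$, and then deduce part (i) by showing $F_t^{n,y}$ is $\mathcal O_{ms}(n^{-1/2})$-close to $F_t^{n,C}$ using Lemma~\ref{lem: evals-vecs Gammay GammaC}.

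For part (ii), the key computation is
\begin{equation*}
F_t^{n,C} = M^{-1/2}(\Gamma_C^n) P(\Gamma_C^n) \Lambda^n F_t = V_n' F_t,
\end{equation*}
where $V_n$ is the $r \times r$ orthogonal matrix of eigenvectors of $(\Lambda^n)'\Lambda^n$, equivalently of $\Gamma_\Lambda^n$. To see this, write the compact SVD $\Lambda^n = U_n S_n V_n'$: then $\Gamma_C^n = U_n S_n^2 U_n'$, so $P'(\Gamma_C^n) = U_n$ and $M(\Gamma_C^n) = S_n^2$, and substituting collapses everything to $V_n' F_t$. By A\ref{A: divergence rates eval}(ii), $\norm{\Gamma_\Lambda^n - \Gamma_\Lambda} = \mathcal O(n^{-1/2})$, and by A\ref{A: divergence rates eval}(iii) the eigenvalues of $\Gamma_\Lambda$ are distinct. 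Applying the Wilkinson expansion used in Lemma~\ref{lem: evals-vecs Gammay GammaC}(i) to the $r\times r$ pair $(\Gamma_\Lambda, \Gamma_\Lambda^n)$ — now with spectral gaps bounded below — gives $\norm{V_n - V} = \mathcal O(n^{-1/2})$, where $V$ diagonalises $\Gamma_\Lambda$. Setting $F_t^\infty := V' F_t$ and using $\E\norm{F_t}^2 = r$ then yields $\norm{F_t^{n,C} - F_t^\infty} \le \norm{V_n - V}\,\norm{F_t} = \mathcal O_{ms}(n^{-1/2})$.

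For part (i), I would decompose
\begin{equation*}
F_t^{n,y} - F_t^{n,C} = \bigl[M^{-1/2}(\Gamma_y^n) P(\Gamma_y^n) - M^{-1/2}(\Gamma_C^n) P(\Gamma_C^n)\bigr] C_t^n + M^{-1/2}(\Gamma_y^n) P(\Gamma_y^n) e_t^n.
\end{equation*}
The idiosyncratic residual satisfies $\E\norm{M^{-1/2}(\Gamma_y^n) P(\Gamma_y^n) e_t^n}^2 = \tr\bigl(M^{-1}(\Gamma_y^n) P(\Gamma_y^n)\Gamma_e^n P'(\Gamma_y^n)\bigr) \le r\,\mu_1(\Gamma_e^n)/\mu_r(\Gamma_y^n) = \mathcal O(n^{-1})$ by A\ref{A: r-SFS struct}. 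For the other term, split into $M^{-1/2}(\Gamma_y^n)[P(\Gamma_y^n) - P(\Gamma_C^n)] C_t^n$ and $[M^{-1/2}(\Gamma_y^n) - M^{-1/2}(\Gamma_C^n)] P(\Gamma_C^n) C_t^n$; Lemma~\ref{lem: evals-vecs Gammay GammaC}(i)--(ii) combined with $\mu_j(\Gamma_y^n), \mu_j(\Gamma_C^n) \asymp n$ yield $\norm{M^{-1/2}(\Gamma_y^n) - M^{-1/2}(\Gamma_C^n)} = \mathcal O(n^{-3/2})$, while $\norm{M^{-1/2}(\Gamma_y^n)} = \mathcal O(n^{-1/2})$, $\norm{C_t^n} = \mathcal O_{ms}(n^{1/2})$ and $\norm{P(\Gamma_C^n) C_t^n} = \mathcal O_{ms}(n^{1/2})$. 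Multiplying through, both pieces are $\mathcal O_{ms}(n^{-1/2})$. Combining with part (ii) by triangle inequality gives (i).

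The main obstacle is establishing $\norm{V_n - V} = \mathcal O(n^{-1/2})$ with a consistent sign convention, so that a single limit $F_t^\infty$ simultaneously serves both (i) and (ii). The rate depends crucially on A\ref{A: divergence rates eval}(iii), which guarantees \emph{distinct} limit eigenvalues of $\Gamma_\Lambda$ — not merely $\mu_r > 0$ — and hence bounded-below denominators $(\mu_j - \mu_l)^{-1}$ in the $r \times r$ Wilkinson expansion. The sign-fixing rule imposed on $\Lambda_r(\Gamma_y^n)$, together with the closeness of $P(\Gamma_y^n)$ and $P(\Gamma_C^n)$ from Lemma~\ref{lem: evals-vecs Gammay GammaC}(i), must be checked to propagate consistently to $V$ in the limit.
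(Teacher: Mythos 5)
Your proposal is correct in substance and delivers the right limit ($F_t^\infty = V'F_t$ with $V$ diagonalising $\Gamma_\Lambda$, which is exactly the paper's $P_\Lambda F_t$ since $V'=P_\Lambda$), but it is organised differently from the paper's proof. The paper attacks (i) directly: it writes $F_t^{n,y}=M^{-1/2}(\Gamma_y^n)P(\Gamma_y^n)\Lambda^n F_t + M^{-1/2}(\Gamma_y^n)P(\Gamma_y^n)e_t^n$, kills the second term at rate $n^{-1/2}$, and then splits $P(\Gamma_y^n)\Lambda^n/\sqrt{n}-D_\Lambda^{-1/2}P_\Lambda\Gamma_\Lambda$ into three pieces using the constructed eigenvector matrices $P_n=(D_\Lambda^n)^{-1/2}P_\Lambda^n(\Lambda^n)'/\sqrt n$ and $\tilde P_n=D_\Lambda^{-1/2}P_\Lambda(\Lambda^n)'/\sqrt n$, with (ii) declared ``analogous''. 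You instead prove (ii) first via the exact SVD identity $F_t^{n,C}=V_n'F_t$ — a genuinely cleaner observation, since it shows the population PCs of $C_t^n$ are an \emph{exact} orthogonal rotation of $F_t$, so (ii) reduces entirely to the $r\times r$ perturbation $\Gamma_\Lambda^n\to\Gamma_\Lambda$ (the same Wilkinson step the paper performs in its bound $\norm{P_\Lambda^n-P_\Lambda}=\mathcal O(n^{-1/2})$, and the place where A\ref{A: divergence rates eval}(ii)--(iii) do all the work) — and then obtain (i) by bounding $F_t^{n,y}-F_t^{n,C}$ through Lemma \ref{lem: evals-vecs Gammay GammaC}(i)--(iii) and the idiosyncratic projection bound; your rate accounting there ($\mathcal O(n^{-3/2})$ for the eigenvalue part, $\mathcal O(n^{-1/2})$ for the eigenvector and idiosyncratic parts) checks out. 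The two routes use the same two ingredients (the $r\times r$ Wilkinson expansion and the closeness of the $y$- and $C$-eigenstructures), so neither is more general, but yours makes the source of the limit more transparent while the paper's makes the normalisation $(M(\Gamma_y^n)/n)^{-1/2}\to D_\Lambda^{-1/2}$ explicit. The sign issue you flag is real but is handled at the same level in the paper (a sign convention on $\Lambda_r(\Gamma_y^n)$ is simply imposed); to discharge it in your version, note that for $n$ large the convention on $P(\Gamma_y^n)$ transfers to $P(\Gamma_C^n)$ by Lemma \ref{lem: evals-vecs Gammay GammaC}(i) and hence pins down the signs of the SVD factors $U_n,V_n$, so a single consistent choice of $V$ works for both (i) and (ii).
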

We may interpret $F_t^\infty$ as the normalised principal components of the static aggregation space. 
\begin{proof}
We want to show that 
\begin{align*}
   F_t^{n, y} \equiv M^{-1/2}(\Gamma_y^n) P(\Gamma_y^n) y_t^n = M^{-1/2}(\Gamma_y^n)  P(\Gamma_y^n) \Lambda^n F_t + M^{-1/2}(\Gamma_y^n)  P(\Gamma_y^n) e_t^n 
\end{align*}
converges. Again, set $\mu_j \equiv \mu_j(\Gamma_y^n), p_j^n \equiv p_j(\Gamma_y^n)$. The second term on the RHS converges to zero as
\begin{align*}
\E (\mu_j^{-1/2} p_j^n e_t^n)^2 \leq \mu_j^{-1}\lambda_1(\Gamma_e^n), \quad \mbox{so} \ \mu_j^{-1/2} p_j^n e_t^n = \mathcal O_{ms}(n^{-1/2}).
\end{align*}
We are left with showing convergence of the first term. Consider the eigen-decompositions:
\begin{align*}
    P_\Lambda^n \Gamma_\Lambda^n (P_\Lambda^n)' = D_\Lambda^n 
    \quad &\mbox{and} \quad
    P_\Lambda \Gamma_\Lambda P_\Lambda' = D_\Lambda \\[0.5em]
    \quad \mbox{and set} \quad  \tilde P_n \equiv D_\Lambda^{-1/2} P_\Lambda \frac{(\Lambda^n)'}{\sqrt{n}}
    \quad &\mbox{and} \quad
    P_n \equiv (D_\Lambda^n)^{-1/2} P_\Lambda^n \frac{(\Lambda^n)'}{\sqrt{n}}.
\end{align*}
It follows that $P_n$ are orthonormal row-eigenvectors of $\Gamma_C^n$ since 
\begin{align*}
    P_n P_n' &= I_r\\
    \mbox{and} \quad P_n \Gamma_C^n &= (D_\Lambda^n)^{-1/2} P_\Lambda^n \frac{(\Lambda^n)'}{\sqrt{n}}\frac{\Lambda^n (\Lambda^n)'}{n} n \\
    &= (D_\Lambda^n)^{-1/2} P_\Lambda^n \Gamma_\Lambda^n \frac{(\Lambda^n)'}{\sqrt{n}} n  = (D_\Lambda^n)^{-1/2} D_\Lambda^n P_\Lambda^n \frac{(\Lambda^n)'}{\sqrt{n}}n = D_\Lambda^n n P_n = M(\Gamma_C^n) P_n.
\end{align*}
In order two show that $F_t^{n, y}$ converges, it is enough to show that 
\begin{align*}
    P(\Gamma_y^n) \frac{\Lambda^n}{\sqrt{n}} F_t 
\end{align*}
converges, which is the case whenever $P(\Gamma_y^n) \frac{\Lambda^n}{\sqrt{n}}$ converges to a constant finite $r\times r$ matrix. 
So 
\begin{align*}
   P(\Gamma_y^n) \frac{\Lambda^n}{\sqrt{n}} - D_\Lambda^{-1/2} P_\Lambda \Gamma_\Lambda = \underbrace{\tilde P_n \frac{\Lambda^n}{\sqrt{n}} - D_\Lambda^{-1/2} P_\Lambda \Gamma_\Lambda}_{(I)} + \underbrace{\left[ P(\Gamma_y^n) - P_n\right] \frac{\Lambda^n}{\sqrt{n}}}_{(II)} + \underbrace{\left[P_n - \tilde P_n \right]\frac{\Lambda^n}{\sqrt{n}}}_{(III)}. 
\end{align*}
where $\norm{\frac{\Lambda^n}{\sqrt{n}}} = \mathcal O(1)$ so $\norm{(II)} = \mathcal O(n^{-1/2})$ by Lemma \ref{lem: evals-vecs Gammay GammaC}. For $(III)$ term note that 
\begin{align*}
    P_n - \tilde P_n = \left[\left(D_\Lambda^n\right)^{-1/2} P_\Lambda^n - 
 D_\Lambda^{-1/2} P_\Lambda  \right]\frac{(\Lambda^n)'}{\sqrt{n}}, 
\end{align*}
while again by \cite{wilkinson1965algebraic} we have with $B_n = n^{1/2}c_n^{-1} (\Gamma_\Lambda^n - \Gamma_\Lambda) = \mathcal O(1)$ by A\ref{A: divergence rates eval}(i), where $c_n$ is the modulus of the maximum entry of $n^{1/2}(\Gamma_\Lambda^n - \Gamma_\Lambda)$ plus some small constant (rendering the entries of $B_n$ smalle than one in modulus) and $\varepsilon_n \equiv n^{-1/2}c_n$, 
\begin{align}
     \norm{p_j(\Gamma_\Lambda^n) - p_j(\Gamma_\Lambda)}^2 &= \sum_{l \neq j, l = 1}^r \left(\frac{p_l\left(\Gamma_\Lambda\right)B_n p_j'(\Gamma_\Lambda)}{\mu_j(\Gamma_\Lambda) - \mu_l(\Gamma_\Lambda)} \right)^2 \varepsilon_n^2 + \cdots = \mathcal O(n^{-1}), \label{eq: evec approx Gamma Lambda}
\end{align}
noting that by A\ref{A: divergence rates eval}(ii) the eigenvalues of $\Gamma_\Lambda$ are distinct, so the denominator on the RHS in equation (\ref{eq: evec approx Gamma Lambda}) does not vanish, consequently $\norm{P_\Lambda^n - P_\Lambda} = \mathcal O(n^{-1/2})$.

Next, for the eigenvalues, we have
\begin{align*}
    &\mu_j(\Gamma_\Lambda^n) - \mu_j(\Gamma_\Lambda) = k_1 \varepsilon_n + k_2 \varepsilon_n^2 + \cdots\\[0.5em]
 \mbox{with} \quad  &k_1 = p_j(\Gamma_\Lambda) B_n p_j' (\Gamma_\Lambda) = \mathcal O(1) \quad j = 1, ..., r
\end{align*}
So $|\mu_j(\Gamma_\Lambda^n) -\mu_j(\Gamma_\Lambda)| = \mathcal O(n^{-1/2})$ and therefore 
\begin{align*}
|\mu_j(\Gamma_\Lambda^n)^{-1} - \mu_j(\Gamma_\Lambda)^{-1}| \leq |\mu_j\left(\Gamma_\Lambda^n\right)^{-1}| |\mu_j\left(\Gamma_\Lambda^n\right) - \mu_j\left(\Gamma_\Lambda\right)||\mu_j(\Gamma_\Lambda)^{-1}| = \mathcal O(n^{-1/2})
\end{align*}
which implies by $(a^2-b^2) = (a-b)/ (a+b)$ for $a, b \in \mathbb R, a+b\neq 0$ that 
\begin{align*}
\norm{(D_\Lambda^n)^{-1/2} - D_\Lambda^{-1/2}} = \mathcal O(n^{-1/2}).
\end{align*}
As a result $(III) = \mathcal O(n^{-1/2})$ since $\norm{\tilde P_n - P_n} = \mathcal O(n^{-1/2})$.
Finally we get
\begin{align*}
    \norm{(I)} = \norm{D_\Lambda^{-1/2} P_\Lambda(\Gamma_\Lambda^n - \Gamma_\Lambda)} = \mathcal O(n^{-1/2}). 
\end{align*}
Now obviously we have 
\begin{align*}
    & P(\Gamma_y^n)\frac{\Lambda^n}{\sqrt{n}} F_t \to D_\Lambda^{-1/2} P_\Lambda \Gamma_\Lambda F_t = D_\Lambda^{1/2} P_\Lambda F_t \\[0.8em]
    \mbox{and therefore} \quad & \norm{\left(\frac{M(\Gamma_y^n)}{n}\right)^{-1} P(\Gamma_y^n) \frac{\Lambda^n}{\sqrt{n}} F_t - P_\Lambda F_t} = \mathcal O_{ms}(n^{-1/2}), 
\end{align*}
so $P_\Lambda F_t = F_t^\infty$. The second statement follows by analogous arguments.  
\end{proof}
\begin{theorem}
Under Assumptions A\ref{A: r-SFS struct} and A\ref{A: divergence rates eval}, there exists a limit of the loadings $\Lambda_i(\Gamma_y^n) \equiv P^i (\Gamma_y^n) M^{1/2}(\Gamma_y^n)$, say $\Lambda_i^\infty$ with
\begin{itemize}
    \item[(i)] $\norm{\Lambda_i(\Gamma_y^n) - \Lambda_i^\infty} = \mathcal O(n^{-1/2})$
    \item[(ii)] $\norm{P^i (\Gamma_C^n) M^{1/2}(\Gamma_C^n) - \Lambda_i^\infty} = \mathcal O(n^{-1/2})$
    \item[(iii)] $\norm{\frac{1}{n}\sum_{i = 1}^n (\Lambda_i^\infty)' \Lambda_i^\infty - D_\Lambda} = \mathcal O(n^{-1/2})$
\end{itemize}
\end{theorem}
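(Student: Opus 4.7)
The plan is to identify $\Lambda_i^\infty$ from Theorem 1 and then reduce (i) and (ii) to the mean square convergence of $F_t^{n,y}$ and $F_t^{n,C}$ by expressing $\Lambda_i(\cdot)$ as a cross covariance, so that I can bypass any row-wise perturbation analysis of $P(\Gamma_y^n)$. Theorem 1 gives $F_t^\infty = P_\Lambda F_t$, and the model yields
\begin{align*}
y_{it} = \Lambda_i F_t + e_{it} = (\Lambda_i P_\Lambda')(P_\Lambda F_t) + e_{it},
\end{align*}
so I would define $\Lambda_i^\infty \equiv \Lambda_i P_\Lambda'$.

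The first step is to establish the identity $\Lambda_i(\Gamma_y^n) = \E[y_{it} (F_t^{n,y})']$, which follows from the eigenequation $\Gamma_y^n P(\Gamma_y^n)' = P(\Gamma_y^n)' M(\Gamma_y^n)$ applied to the $i$-th row and the definition of $F_t^{n,y}$. Since $\E F_t F_t' = I_r$, one has $\Lambda_i^\infty = \Lambda_i \E[F_t (F_t^\infty)']$, and the standard factor--idiosyncratic orthogonality then produces the decomposition
\begin{align*}
\Lambda_i(\Gamma_y^n) - \Lambda_i^\infty = \Lambda_i\, \E\bigl[F_t (F_t^{n,y} - F_t^\infty)'\bigr] + \E\bigl[e_{it} (F_t^{n,y})'\bigr].
\end{align*}
I would bound the first term by Cauchy--Schwarz, $\norm{\Lambda_i} \leq B_\Lambda$ from A\ref{A: divergence rates eval}(iv), and Theorem 1, yielding $\mathcal O(n^{-1/2})$. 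For the second, unrolling the definition of $F_t^{n,y}$ gives $(\Gamma_e^n)_{i,\cdot}\, P(\Gamma_y^n)'\, M^{-1/2}(\Gamma_y^n)$, whose norm is at most $\mu_1(\Gamma_e^n)\, \mu_r(\Gamma_y^n)^{-1/2} = \mathcal O(1)\cdot\mathcal O(n^{-1/2})$ by A\ref{A: r-SFS struct} and Lemma \ref{lem: evals-vecs Gammay GammaC}(iii). Together these give (i).

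Part (ii) follows from the same identity applied to $C_{it}$: since $C_{it} = \Lambda_i F_t$ carries no idiosyncratic piece, $\Lambda_i(\Gamma_C^n) - \Lambda_i^\infty = \Lambda_i \E[F_t (F_t^{n,C} - F_t^\infty)'] = \mathcal O(n^{-1/2})$ directly from Theorem 1. For (iii) I would simply sum the definition of $\Lambda_i^\infty$:
\begin{align*}
\frac{1}{n}\sum_{i=1}^n (\Lambda_i^\infty)' \Lambda_i^\infty = P_\Lambda\, \Gamma_\Lambda^n\, P_\Lambda',
\end{align*}
and use $P_\Lambda \Gamma_\Lambda P_\Lambda' = D_\Lambda$ together with $\norm{P_\Lambda} = 1$ to conclude $\norm{P_\Lambda(\Gamma_\Lambda^n - \Gamma_\Lambda)P_\Lambda'} \leq \norm{\Gamma_\Lambda^n - \Gamma_\Lambda} = \mathcal O(n^{-1/2})$ via A\ref{A: divergence rates eval}(ii).

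The main obstacle I anticipate is resisting the more direct route of bounding $P^i(\Gamma_y^n) - P^i(\Gamma_C^n)$ through Lemma \ref{lem: evals-vecs Gammay GammaC}(i) and then multiplying by $M^{1/2}(\Gamma_y^n)$: this yields only $\mathcal O(n^{-1/2})\cdot\mathcal O(n^{1/2}) = \mathcal O(1)$ because it fails to exploit the cancellation between the two scales $P^i \sim n^{-1/2}$ and $M^{1/2} \sim n^{1/2}$. The cross covariance reformulation bundles both factors into $F_t^{n,y}$ itself and delegates the rate control to the mean square bound already established in Theorem 1.
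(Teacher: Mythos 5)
Your proof is correct and follows essentially the same route as the paper: both rest on the cross-covariance identity $\Lambda_i(\Gamma_y^n) = \E\left[y_{it}(F_t^{n,y})'\right]$ (from $\V F_t^{n,y} = I_r$ and the eigen-equation) and then reduce (i)--(iii) to the $L^2$ convergence of Theorem 1 plus A\ref{A: divergence rates eval}(ii) via Cauchy--Schwarz. The only harmless variations are that you make $\Lambda_i^\infty = \Lambda_i P_\Lambda'$ explicit and bound the term $\E\left[e_{it}(F_t^{n,y})'\right]$ directly by $\mu_1(\Gamma_e^n)\,\mu_r(\Gamma_y^n)^{-1/2} = \mathcal O(n^{-1/2})$, whereas the paper subtracts $F_t^\infty$ (using $\E F_t^\infty e_{it} = 0$) and controls the same term by $\norm{F_t^{n,y}-F_t^\infty}_{L^2}\norm{e_{it}}_{L^2}$.
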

\begin{proof}
 One way of proving the convergence of the loadings would be by looking directly at the limit of $\sqrt{n}P^i(\Gamma_y^n) \left(\frac{M(\Gamma_y^n)}{n}\right)^{1/2}$, using the approximation of eigenvectors and eigenvalues from above. Alternatively, note that
 \begin{align*}
    C_{it} = \proj\left(y_{it}\mid \spargel(F_t^\infty)\right) &= \Lambda_i^\infty F_t^\infty\\
    \mbox{and} \quad \proj\left(y_{it}\mid \spargel(F_t^{n, y})\right) &= \Lambda_i(\Gamma_y^n) F_t^{n, y}.
 \end{align*}
since by construction $\V F_t^{n, y} = I_r$, we know that $\E F_t^{n,y}y_{it} = \Lambda_i(\Gamma_y^n)'$. Now check the limit of $\Lambda_i(\Gamma_y^n)'$ by
\begin{align*}
    \E F_t^{n, y} y_{it} - (\Lambda_i^\infty)'  &= \E \left[F_t^{n, y} (\Lambda_i^\infty F_t^\infty + e_{it})'\right] - (\Lambda_i^\infty)'\\
    &=\E \left[(F_t^{n, y} - F_t^\infty + F_t^\infty) e_{it}\right] + \left\{\E\left[(F_t^{n, y} - F_t^\infty + F_t^\infty) (F_t^\infty)'\right] - I_r\right\} (\Lambda_i^\infty)' \\
    &= \E\left[ (F_t^\infty - F_t^{n, y}) e_{it}\right] + \E \left[(F_t^{n, y} - F_t^\infty) (F_t^\infty)'\right] (\Lambda_i^\infty)' \\[0.8em]
    \mbox{so} \quad \norm{\Lambda_i(\Gamma_y^n) - \Lambda_i^\infty} &\leq  \norm{F_t^{n, y} - F_t^\infty}_{L^2} \norm{e_{it}}_{L^2} + \norm{F_t^{n, y} - F_t^\infty}_{L^2} \norm{F_t^\infty}_{L^2} \norm{\Lambda_i^\infty} \\
    &=\mathcal O(n^{-1/2}) \mathcal O(1) + \mathcal O(n^{-1/2})\mathcal O(1) \mathcal O(1) = \mathcal O(n^{-1/2}),  
\end{align*}
where we used that $\V F_t^\infty = I_r$ and $\E F_t^\infty e_{it} = 0$ in the third line and Cauchy-Schwarz' inequality in the fourth line. The second statement is proved with analogous arguments.

It immediately follows that $\frac{1}{n}\sum_{i = 1}^n (\Lambda_i^\infty)' \Lambda_i^\infty$ converges to the diagonal matrix $D_\Lambda$ with rate $n^{-1/2}$.
\end{proof}
\section{Asymptotic Theory without Rotation Matrices}
\begin{assumption}[Estimation of Variance Matrices]\label{A: Var est}\ \\[-2.5em]
    \begin{itemize}
         \item[(ii)] $\norm{\hat \Gamma_F - \Gamma_F} = \mathcal O_{ms}(T^{-1/2})$, where $\hat \Gamma_F = T^{-1}\sum_{t=1}^T F_t F_t'$.
          \item[(iv)] $\norm{\hat \Gamma_{Fe}^n p - \Gamma_{Fe}^n p} = \mathcal O_{ms}(T^{-1/2})$ for all $n \in \mathbb N$ with for all $p$ with $\norm{p} = 1$, where $\Gamma_{Fe}^n \equiv  \E F_t (e_t^n)'$ and $\hat \Gamma_{Fe}^n \equiv T^{-1}\sum_{t = 1}^T F_t (e_t^n)'$. 
        \item[(v)] $p(\hat \Gamma_e^n - \Gamma_e^n)p' = \mathcal O_{ms}(T^{-1/2})$ for all $n\in \mathbb N$ with for all $p$ with $\norm{p} = 1$, where $\E e_t^n (e_t^n)' \equiv \Gamma_e^n$ and $\hat \Gamma_e^n = T^{-1}\sum_{t = 1}^T e_t^n (e_t^n)'$
    \end{itemize}
\end{assumption}
%
%
%
%
%
%
%
%
%
Primitive conditions for this assumptions can be given along the lines of \cite{barigozzi2022estimation}.

%
%
\begin{lemma}
    Under Assumptions A\ref{A: r-SFS struct}-A\ref{A: Var est}, we have 
    \begin{itemize}
        \item[(i)] $\norm{\hat p_j^n - p_j^n} = \mathcal O_{ms}(T^{-1/2})$
        \item[(ii)] $\norm{\frac{M(\hat\Gamma_y^n)}{n}  - \frac{M(\Gamma_y^n)}{n}} = \mathcal O_{ms}(T^{-1/2})$
        \item[(iii)] $|\mu_j(\hat \Gamma_y^n)^{-1/2} - \mu_j(\Gamma_y^n)^{-1/2}| = \mathcal O_{ms}(n^{-1/2}T^{-1/2})$.
        \item[(iv)] $|\hat p_{ij}^n - p_{ij}^n| = \mathcal O_{ms}(n^{-1/2}T^{-1/2})$
        \item[(v)] $\norm{\hat{\mathcal K}_j^{F, n} - \mathcal K_j^{F, n}} = \mathcal O_{ms}(T^{-1/2}n^{-1/2})$
        %
        %
    \end{itemize}
    \end{lemma}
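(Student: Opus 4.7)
The overall strategy is to extend the perturbation technique used in Lemma \ref{lem: evals-vecs Gammay GammaC} by setting $E^n \equiv \hat\Gamma_y^n - \Gamma_y^n$ and substituting $y_t^n = \Lambda^n F_t + e_t^n$ into the definitions of $\hat\Gamma_y^n$ and $\Gamma_y^n$, which yields the decomposition
\begin{align*}
E^n = \Lambda^n(\hat\Gamma_F - \Gamma_F)(\Lambda^n)' + \Lambda^n(\hat\Gamma_{Fe}^n - \Gamma_{Fe}^n) + (\hat\Gamma_{Fe}^n - \Gamma_{Fe}^n)'(\Lambda^n)' + (\hat\Gamma_e^n - \Gamma_e^n).
\end{align*}
Assumption A\ref{A: Var est}(ii), (iv), (v) then delivers $T^{-1/2}$-rates when these four pieces are contracted against unit vectors or $\mathcal{O}(\sqrt n)$-norm blocks of $\Lambda^n$. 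All five statements of the lemma reduce to applying the Wilkinson first-order expansions from \cite{wilkinson1965algebraic} with perturbation $E^n$ to $\hat \Gamma_y^n = \Gamma_y^n + E^n$, and then plugging in the bounds for $p_l^n E^n (p_j^n)'$ derived from this decomposition.

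For (i), the Wilkinson eigenvector expansion gives
\begin{align*}
\|\hat p_j^n - p_j^n\|^2 = \sum_{l\neq j} \frac{\bigl(p_l^n E^n (p_j^n)'\bigr)^2}{\bigl(\mu_j(\Gamma_y^n)-\mu_l(\Gamma_y^n)\bigr)^2} + \text{h.o.t.}
\end{align*}
For $l \leq r$, the denominator is $\gtrsim n$ by A\ref{A: divergence rates eval} and Lemma \ref{lem: evals-vecs Gammay GammaC}(iii), while the numerator is $\mathcal{O}_{ms}(n T^{-1/2})$ from the four-term bound. For $l > r$, the denominator is still $\gtrsim n$ (since $\mu_l = \mathcal{O}(1)$, $\mu_j \sim n$) and Parseval absorbs the tail: $\sum_{l}(p_l^n E^n (p_j^n)')^2 = \|E^n(p_j^n)'\|^2 = \mathcal{O}_{ms}(n^2 T^{-1})$. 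Both contributions are $\mathcal{O}_{ms}(T^{-1})$, giving (i). Statement (ii) is the analogous eigenvalue expansion $\mu_j(\hat\Gamma_y^n)-\mu_j(\Gamma_y^n) = p_j^n E^n (p_j^n)' + \text{h.o.t.}$ divided by $n$, and (iii) follows from
\begin{align*}
\mu_j(\hat\Gamma_y^n)^{-1/2} - \mu_j(\Gamma_y^n)^{-1/2} = \frac{\mu_j(\Gamma_y^n) - \mu_j(\hat\Gamma_y^n)}{\mu_j(\Gamma_y^n)^{1/2}\mu_j(\hat\Gamma_y^n)^{1/2}\bigl(\mu_j(\Gamma_y^n)^{1/2}+\mu_j(\hat\Gamma_y^n)^{1/2}\bigr)}
\end{align*}
combined with $\mu_j,\mu_j(\hat\Gamma_y^n) \asymp n$.

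For (iv), the Wilkinson expansion is taken coordinate-wise,
\begin{align*}
\hat p_{ij}^n - p_{ij}^n = \sum_{l\neq j}\frac{p_l^n E^n (p_j^n)'}{\mu_j-\mu_l}\,p_{li}^n + \text{h.o.t.},
\end{align*}
and split at $l=r$. For $l \leq r$, Theorem 2 gives $|p_{li}^n| = |\Lambda_{i,l}(\Gamma_y^n)|/\mu_l(\Gamma_y^n)^{1/2} = \mathcal{O}(n^{-1/2})$, which combines with the previous $\mathcal{O}_{ms}(T^{-1/2})$ bound on $(p_l^n E^n (p_j^n)')/(\mu_j-\mu_l)$ to yield the desired rate for the finitely many terms. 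For $l > r$, one sharpens the numerator by noting that $\Lambda^n(\Lambda^n)'(p_l^n)' = (\mu_l^n I - \Gamma_e^n)(p_l^n)'$ together with $\mu_1(\Gamma_e^n)=\mathcal{O}(1)$ and $\mu_r\bigl((\Lambda^n)'\Lambda^n/n\bigr) \geq c > 0$ force $\|p_l^n \Lambda^n\| = \mathcal{O}(n^{-1/2})$, which cuts three of the four pieces of $p_l^n E^n (p_j^n)'$. The surviving piece $p_l^n(\hat\Gamma_{Fe}^n-\Gamma_{Fe}^n)'(\Lambda^n)'(p_j^n)'$ is treated by exchanging the order of summation and invoking the resolution of identity $\sum_l (p_l^n)' p_{li}^n = e_i$ so that $\sum_{l>r}(\hat\Gamma_{Fe}^n-\Gamma_{Fe}^n)(p_l^n)' p_{li}^n = (\hat\Gamma_{Fe}^n-\Gamma_{Fe}^n)e_i - \sum_{l\leq r}(\hat\Gamma_{Fe}^n-\Gamma_{Fe}^n)(p_l^n)' p_{li}^n$, whose first term is $\mathcal{O}_{ms}(T^{-1/2})$ by a coordinate-wise version of A\ref{A: Var est}(iv) and which, after division by $\mu_j \sim n$ and multiplication by $\|w_j\| = \|(\Lambda^n)'(p_j^n)'\| = \mathcal{O}(\sqrt n)$, delivers the missing $n^{-1/2}T^{-1/2}$ contribution. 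Finally (v) follows by writing $\hat{\mathcal K}_j^{F,n}$ as a product of the $\mu_j^{-1/2}$-, $p_j^n$- and $\Lambda^n/\sqrt n$-factors from the earlier decomposition and composing the rates from (iii), (iv) with Theorem 2.

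The main obstacle is (iv): a naive Wilkinson-plus-Cauchy--Schwarz argument only yields the coarser rate $\mathcal{O}_{ms}(T^{-1/2})$, and the extra $n^{-1/2}$ has to be extracted by combining (a) the \emph{coordinate-wise} structure of the eigenvector entries $p_{li}^n$ for $l \leq r$ (via Theorem 2), (b) the \emph{near-orthogonality} bound $\|p_l^n \Lambda^n\| = \mathcal{O}(n^{-1/2})$ for $l > r$, and (c) a resolution-of-identity rearrangement that converts the directional bound in A\ref{A: Var est}(iv) into a pointwise-in-$i$ bound; getting the book-keeping of these three ingredients right is the only genuinely delicate step in the proof.
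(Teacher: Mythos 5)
Your proposal is correct in substance and rests on the same machinery as the paper: the identical four-term decomposition of $\hat\Gamma_y^n-\Gamma_y^n$ into the $(\hat\Gamma_F-I_r)$-, $\hat\Gamma_{Fe}^n$-, $\hat\Gamma_{eF}^n$- and $(\hat\Gamma_e^n-\Gamma_e^n)$-pieces, Wilkinson first-order expansions, the order-$n$ eigengap from A\ref{A: divergence rates eval}, and the key fact that $p_l(\Gamma_y^n)\Lambda^n$ is small for $l>r$. The differences are in execution and are mostly to your credit. First, for the tail $l>r$ you obtain $\norm{p_l^n\Lambda^n}=\mathcal O(n^{-1/2})$ from the eigen-equation $\Lambda^n(\Lambda^n)'(p_l^n)'=(\mu_l I-\Gamma_e^n)(p_l^n)'$ together with $\mu_l=\mathcal O(1)$ and $\mu_r\bigl((\Lambda^n)'\Lambda^n/n\bigr)\geq c>0$; the paper instead projects $p_l^n$ on the span of the trailing eigenvectors of $\Gamma_C^n$ and only extracts the weaker bound $\norm{p_l^n\Lambda^n}=\mathcal O(1)$, which is still enough because it keeps per-term bounds of order $n^{-1/2}T^{-1/2}$ after division by the gap -- your Parseval shortcut $\sum_l\bigl(p_l^nE^n(p_j^n)'\bigr)^2=\norm{E^n(p_j^n)'}^2$ reaches the same total $\mathcal O_{ms}(T^{-1})$ with less bookkeeping. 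Second, parts (ii), (iii) and (v) are handled essentially as in the paper; note that (v) needs only (i), (iii) and $\mu_j^{-1/2}=\mathcal O(n^{-1/2})$ via the product rule, so routing it through (iv) is unnecessary. Third, and notably, you supply an argument for the entry-wise statement (iv), which the paper's displayed proof in fact never addresses (it proves only (i), (ii)--(iii) and (v)); your combination of $|p_{lj}^n|=\mathcal O(n^{-1/2})$ for $l\leq r$ (via the loadings theorem), the sharpened tail bound, and the resolution-of-identity rearrangement of the $\hat\Gamma_{eF}^n$-piece (with A\ref{A: Var est}(iv) applied to $p=e_i$) is a sensible way to get the extra $n^{-1/2}$. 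The one step you should make explicit there is the replacement of the $l$-dependent denominators $\mu_j-\mu_l$ by $\mu_j$ before exchanging the order of summation: since $\mu_l=\mathcal O(1)$ uniformly for $l>r$ while $\mu_j\asymp n$, the induced error is $\mu_l/\bigl(\mu_j(\mu_j-\mu_l)\bigr)=\mathcal O(n^{-2})$ per term and is absorbed by Cauchy--Schwarz, but as written the exchange is not licensed. With that (and the implicit uniformity over directions $p$ of the $\mathcal O_{ms}$ constants in A\ref{A: Var est}, which the paper also assumes tacitly through $\hat B_F,\hat B_{Fe},\hat B_e$) your proof is complete and, on point (iv), more complete than the paper's.
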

\begin{proof}
(i) Set $\hat \Gamma^n = \hat \Gamma_y^n = T^{-1} \sum_{t = 1}^T y_t^n (y_t^n)'$ and $\Gamma^n = \Gamma_y^n$. Set $\hat \Gamma_F = T^{-1} \sum_{t = 1}^T F_t F_t'$, $\hat \Gamma_{eF}^n = T^{-1} \sum_{t = 1}^T e_t^n F_t'$ and $\hat \Gamma_e^n = T^{-1} \sum_{t = 1}^T e_t^n (e_t^n)'$, so
{
\begin{align}
    t_{lj} &= p_l(\Gamma_y^n) \left(\hat \Gamma_y^n  - \Gamma_y^n\right) p_j(\Gamma_y^n)' \nonumber \\[0.5em]
    &=   p_l(\Gamma_y^n) \left( \Lambda^n \hat \Gamma_F \left(\Lambda^n\right)' - \Lambda^n\left(\Lambda^n\right)' +  \Lambda^n \hat \Gamma_{Fe}^n +  \hat \Gamma_{eF}^n (\Lambda^n)' + \hat \Gamma_e^n - \Gamma_e^n \right) p_j(\Gamma_y^n)'  \nonumber \\[0.5em]
    &= p_l(\Gamma_y^n)\Lambda^n \left(\hat \Gamma_F - I_r \right)(\Lambda^n)'p_j(\Gamma_y^n)' + p_l(\Gamma_y^n)\Lambda^n \hat \Gamma_{Fe}^n p_j(\Gamma_y^n)' \nonumber \\[0.5em]
    &  \hspace{7.5cm} + p_l(\Gamma_y^n) \hat \Gamma_{eF}^n (\Lambda^n)'p_j(\Gamma_y^n)' + p_l(\Gamma_y^n)\left(\hat \Gamma_e^n - \Gamma_e^n\right) p_j(\Gamma_y^n)' \nonumber \\[0.5em]
    &= (I)_{lj} + (II)_{lj} + (III)_{lj} + (IV)_{lj}, \ \mbox{say}. \label{eq: terms for evec approx.}
\end{align}
}
Firstly, setting $p_l \equiv p_l(\Gamma_y^n), p_l^C \equiv p_l(\Gamma_C^n)$ for $1\leq l \leq n$, $n \in \mathbb N$, and $\mu_{1, n}^{1/2} \equiv \sqrt{\mu_1(\Gamma_C^n)}$, we have
\begin{align*}
    (I)_{lj} &\leq \norm{p_l \Lambda^n} \norm{p_j \Lambda^n} \mu_1(\hat \Gamma_F - I_r) \\
  \mbox{where} \quad \norm{p_l \Lambda^n} &= \norm{p_l^C\Lambda^n + \underbrace{(p_l - p_l^C)}_{q_l} \Lambda^n} \leq \norm{p_l^C \Lambda^n} + \norm{q_l \Lambda^n} \\[0.8em]
    &\leq \begin{cases}
    \mu_{1, n}^{1/2} +  B_p n^{-1/2} \mu_{1, n}^{1/2} \leq \mu_{1, n}^{1/2} +  B_p B_\mu \ \mbox{for} \ 1\leq l \leq r \\[0.8em]
    0 + \norm{q_l \Lambda^n} \leq  r B_p B_\mu \ \mbox{for} \  l > r,
    \end{cases} 
\end{align*}
where the second case is obtained as follows: For $l > r$, let $G_n$ be the matrix of orthonormal row eigenvectors of $p_{r+1}(\Gamma_C^n), ..., p_{n}(\Gamma_C^n)$ which are in the kernel of $\Lambda^n$ and let $P_n^C$ the matrix of orthonormal row eigenvectors $p_1(\Gamma_C^n), ..., p_r(\Gamma_C^n)$ and $P_n^y$ analogously. From linear projection on the row space of $G_n$, with residual $v_l$ (dependence on $n$ is omitted in the notation),
\begin{align*}
    p_l &= \proj\left(p_l \mid \text{row}(G_n)\right) + v_l \\
   v_l &= p_l - p_l G_n'G_n = p_l \left(I_n - G_n'G_n\right) = p_l\left[P_n^C\right]' P_n^C \\
    &= p_l\left[P_n^C - P_n^y\right]' P_n^C = p_l \begin{bmatrix}
        q_1' & \cdots q_r'
    \end{bmatrix} P_n^C. 
\end{align*}
Consequently
\begin{align*}
    q_l \Lambda^n &= \left[p_l - p_l^C\right] \Lambda^n = \left[ \proj\left(p_l \mid \text{row}(G_n)\right) + v_l - p_l^C\right] \Lambda^n = v_l \Lambda^n \\[0.8em]
    & \mbox{where} \norm{v_l} = \norm{\begin{pmatrix}
        p_l q_1' & \cdots & p_l q_r'
    \end{pmatrix} P_n^C} = \norm{\sum_{j = 1}^r p_l q_j' p_j^C} \\
    & \leq \sum_{j = 1}^r \norm{p_l q_j'} \norm{p_j^C} \leq \sum_{j = 1}^r \norm{q_j} \leq r  n^{-1/2} B_p\ . 
\end{align*}
This yields 
\begin{align*}
    \norm{q_l \Lambda^n} \leq \norm{v_l} \mu_{1, n}^{1/2} \leq r B_p n^{-1/2} \mu_{1, n}^{1/2} = r B_p B_\mu \quad l > r, n \in \mathbb N.  
\end{align*}
Therefore for all $n\in \mathbb N$, 
\begin{align*}
  (I)_{lj} &\leq \begin{cases}
\left(\mu_{1, n}^{1/2} (1 +  B_p n^{-1/2})\right)^2 \mu_1\left(\hat \Gamma_F - I_r\right)   \ \mbox{for} \ 1 \leq l \leq r \\[0.8em]
\left(\mu_{1, n}^{1/2} (1 + B_p n^{-1/2})\right) r B_p n^{-1/2}  \mu_1\left(\hat \Gamma_F - I_r\right)  \ \mbox{for} \ l > r.
    \end{cases} \\[0.8em]
    n^{-1}(I)_{lj} &\leq \begin{cases}
    \left(B_\mu^2 +n^{-1/2} B_\mu^2 B_p + n^{-1}(B_p B_\mu)^2 \right) \hat B_F   \ \mbox{for} \ 1 \leq l \leq r \\[0.8em]
    n^{-1} r \left(B_p B_\mu + B_p^2 B_\mu \right)\hat B_F  \ \mbox{for} \ l > r.
        \end{cases}
\end{align*}
Next, by A\ref{A: Var est}, we have that 
\begin{align*}
    (II)_{lj} = p_l(\Gamma_y^n) \Lambda^n \hat \Gamma_{Fe} p_j(\Gamma_y^n)' &\leq \begin{cases}
        \left(\mu_{1,n}^{1/2} + \mu_{1, n}^{1/2} B_p n^{-1/2}\right)\hat B_{Fe}  \ \mbox{for} \ 1\leq l \leq r \\[0.8em]
        \mu_{1, n}^{1/2} r B_p n^{-1/2} \hat B_{Fe}\ \mbox{for} \ l > r, 
    \end{cases}\\[0.8em]
    n^{-1}(II)_{lj} &\leq \begin{cases}
       \left(n^{-1/2}B_\mu  + n^{-1}B_p B_\mu\right) \hat B_{Fe} \ \mbox{for} \ 1\leq l \leq r \\[0.8em]
        n^{-1} r B_p B_\mu \hat B_{Fe}\ \mbox{for} \ l > r, 
    \end{cases}
\end{align*}
and 
\begin{align*}
    (III)_{lj} = p_l(\Gamma_y^n) \hat \Gamma_{eF}^n (\Lambda^n)' p_j(\Gamma_y^n)' &\leq \begin{cases}
        \left(\mu_{1,n}^{1/2} + \mu_{1, n}^{1/2} B_p n^{-1/2}\right)\hat B_{Fe}  \ \mbox{for} \ 1\leq l \leq r \\[0.8em]
         \left(\mu_{1,n}^{1/2} + \mu_{1, n}^{1/2} B_p n^{-1/2}\right) \hat B_{Fe}  \ \mbox{for} \ l > r. 
    \end{cases}\\[0.8em]
    n^{-1}(III)_{lj} &\leq \begin{cases}
        \left(n^{-1/2} B_\mu + n^{-1}B_\mu B_p\right)\hat B_{Fe}  \ \mbox{for} \ 1\leq l \leq r \\[0.8em]
         \left(n^{-1/2} B_\mu + n^{-1} B_p B_\mu\right) \hat B_{Fe}  \ \mbox{for} \ l > r. 
    \end{cases}
\end{align*}
Finally $(IV)_{lj} \leq \hat B_e$ from  A\ref{A: Var est}. 

Therefore putting things together with $n^{-1/2} \mu_{1, n}^{1/2} \leq B_\mu$, we obtain
{
\scriptsize
\begin{align*}
    &\sum_{l \neq j, l = 1}^n \left(\frac{t_{lj}}{\mu_j - \mu_l}\right)^2 = \sum_{l \neq j, l = 1}^r \left(\frac{t_{lj}}{\mu_j - \mu_l}\right)^2 + \sum_{l \neq j, l = r+1}^n \left(\frac{t_{lj}}{\mu_j - \mu_l}\right)^2\\[1em]
    &= \sum_{l \neq j, l = 1}^r \left(\frac{n^{-1} \left[(I)_{lj} + (II)_{lj} + (III)_{lj} + (IV)_{lj} \right]}{n^{-1}\left(\mu_j - \mu_l\right)}\right)^2 + \sum_{l \neq j, l = r+1}^n \left(\frac{n^{-1} \left[(I)_{lj} + (II)_{lj} + (III)_{lj} + (IV)_{lj} \right]}{n^{-1}\left(\mu_j - \mu_l\right)}\right)^2 \\[0.8em]
    &\leq r \frac{1}{\inf_n\min_{l\neq j} n^{-1} (\mu_j - \mu_l)}\left[
    \left(B_\mu^2 + n^{-1/2} B_\mu^2 B_p + n^{-1} (B_p B_\mu)^2\right)\hat B_F + 2 (n^{-1/2} B_\mu + n^{-1} B_p B_\mu) \hat B_{Fe}  + n^{-1} \hat B_e
    \right]^2 \\
    &+(r - n) \frac{1}{\inf_n\min_{l\neq j} n^{-1} (\mu_j - \mu_l)} \left(
    n^{-1} r \left(B_p B_\mu + B_p^2 B_\mu \right)\hat B_F + 
    n^{-1} r B_p B_\mu \hat B_{Fe} + \left(n^{-1/2} B_\mu + n^{-1} B_p B_\mu\right) \hat B_{Fe} + n^{-1} \hat B_e
    \right)^2\\
    &= \mathcal O_p(T^{-1/2})^2 + (r-n)\left[\mathcal O(n^{-1/2}) \mathcal O_p (T^{-1/2})\right]^2 = \mathcal O_p(T^{-1}).  
\end{align*}
}
(ii) Again, we employ \cite{wilkinson1965algebraic}, section 2, equation (5.5) together with (\ref{eq: terms for evec approx.}):
{
\footnotesize
\begin{align*}
    &\frac{\mu_j(\hat \Gamma_y^n)}{n} - \frac{\mu_j(\Gamma_y^n)}{n} = k_1 \varepsilon_n + k_2 \varepsilon_n^2 + \cdots \\
\mbox{with}\ &k_1 = p_j(\Gamma_y^n) \left(\hat \Gamma_y^n - \Gamma_y^n \right) p_j' (\Gamma_y^n) = n^{-1}t_{jj} = n^{-1}\left[(I)_{jj} + (II)_{jj}  + (III)_{jj} + (IV)_{jj}\right] \quad (\mbox{for} j \leq r) \\
&= r\left[  \left(B_\mu^2 +n^{-1/2} B_\mu^2 B_p + n^{-1}(B_p B_\mu)^2 \right) \hat B_F  +
\left(n^{-1/2}B_\mu  + n^{-1}B_p B_\mu\right) \hat B_{Fe} +   \left(n^{-1/2} B_\mu + n^{-1}B_\mu B_p\right)\hat B_{Fe} +\hat B_e \right] \\
&= \mathcal O_p(T^{-1/2}).
\end{align*}
}
Furthermore, setting $\hat \mu_j \equiv \hat \mu_j(\hat \Gamma_y^n), \mu_j \equiv \mu_j(\Gamma_y^n)$
\begin{align*}
    \left|\left(\frac{\hat \mu_j}{n}\right)^{-1} - \left(\frac{\mu_j}{n}\right)^{-1}  \right| \leq \left|\left(\frac{\hat \mu_j}{n}\right)^{-1} \right| \left|\frac{\hat \mu_j}{n} - \frac{\mu_j}{n}\right| \left|\left(\frac{\mu_j}{n}\right)^{-1}\right| = \mathcal O_p(1) \mathcal O_p(T^{-1/2}) \mathcal O(1) = \mathcal O_p(T^{-1/2}).
\end{align*}
Finally, since for all $a,b \in \mathbb R$ with $a+b \neq 0$, we have $(a^2 - b^2)/(a+b) = (a-b)$, we conclude
\begin{align*}
|\hat \mu_j^{-1/2} - \mu_j^{-1/2}| &= n^{-1/2} \left | \left(\frac{\hat \mu_j}{n}\right)^{-1/2}- \left(\frac{\mu_j}{n}\right)^{-1/2}\right| \\
&=n^{-1/2}  \left|\left(\frac{\hat \mu_j}{n}\right)^{-1} - \left(\frac{\mu_j}{n}\right)^{-1}  \right| \left | \left(\frac{\hat \mu_j}{n}\right)^{1/2} +\left(\frac{\mu_j}{n}\right)^{1/2}\right|^{-1} = n^{-1/2} \mathcal O_p(T^{-1/2}). 
\end{align*}
(iii) For the last statement observe that 
\begin{align*}
        \norm{\hat{\mathcal K}_j^{F, n} - \mathcal K_j^{F, n}} &= \norm{\hat \mu_j^{-1/2} \hat p_j^n - \mu_j^{-1/2} p_j^n} \\
        &= \norm{\hat p_j^n - p_j} |\mu_j^{-1/2}| + |\hat \mu_j^{-1/2} - \mu_j^{-1/2}| \norm{p_j^n} + \norm{\hat p_j^n - p_j^n}|\hat \mu_j^{-1/2} - \mu_j^{-1/2}| \\
        &= \mathcal O_p(T^{-1/2})\mathcal O(n^{-1/2}) + \mathcal O_p(T^{-1/2} n^{-1/2}) +
         \mathcal O_p(T^{-1}n^{-1/2}) = \mathcal O_p(T^{-1/2}n^{-1/2}).
  \end{align*}  
\end{proof}
\begin{theorem}[Consistency of Factors, Factor space and Common Component]
Under assumptions A\ref{A: r-SFS struct} to A\ref{A: Var est}, we have
    \begin{itemize}
        \item[(i)] $\norm{\hat F_t^n - F_t^{\infty}} = \mathcal O_{ms}(\max(T^{-1/2}, n^{-1/2}))$
        \item[(ii)] $\norm{\hat F_t^n - \hat H_n F_t} = \mathcal O_{ms}(n^{-1/2})$ for $\hat H_n = T^{-1}\sum_{t = 1}^T \hat F_t^n  F_t' \left(T^{-1} \sum_{t = 1}^T F_t F_t'\right)^{-1}$ with finite $T < \infty$.
        %
        %
    \end{itemize}
\end{theorem}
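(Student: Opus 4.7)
The plan is to handle parts (i) and (ii) separately, each by a triangle-inequality decomposition that leverages the rates already established in Theorem 1 and Lemma 2, and then to identify where the cross-sectional $n^{-1/2}$ aggregation and the time-series $T^{-1/2}$ estimation show up respectively.

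For part (i), I would insert the intermediate quantity $F_t^{n,y}$ and bound
\[
\norm{\hat F_t^n - F_t^\infty} \;\leq\; \norm{\hat F_t^n - F_t^{n,y}} \;+\; \norm{F_t^{n,y} - F_t^\infty}.
\]
The second summand is $\mathcal O_{ms}(n^{-1/2})$ directly from Theorem 1(i). For the first, I would write the $j$-th coordinate of $\hat F_t^n - F_t^{n,y}$ as $(\hat{\mathcal K}_j^{F,n} - \mathcal K_j^{F,n}) y_t^n$ and use $\E[a\,y_t^n (y_t^n)' a'] \leq \|a\|^2 \mu_1(\Gamma_y^n)$ with $\mu_1(\Gamma_y^n) = \mathcal O(n)$ (Lemma 1(iii)). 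Combining with Lemma 2(v), which gives $\|\hat{\mathcal K}_j^{F,n} - \mathcal K_j^{F,n}\|^2 = \mathcal O_{ms}(T^{-1}n^{-1})$, the first summand becomes $\mathcal O_{ms}(T^{-1/2})$, so the maximum of the two rates yields the claim.

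For part (ii), the key manoeuvre is to eliminate $\hat H_n$ by rewriting it in terms of the sample eigendecomposition. Substituting $\hat F_t^n = \hat M^{-1/2}\hat P \Lambda^n F_t + \hat M^{-1/2}\hat P e_t^n$ into the OLS definition $\hat H_n = T^{-1}\sum_t \hat F_t^n F_t'\, \hat\Gamma_F^{-1}$ gives
\[
\hat H_n \;=\; \hat M^{-1/2}\hat P \Lambda^n \;+\; \hat M^{-1/2}\hat P \hat\Gamma_{eF}^n \hat\Gamma_F^{-1},
\]
so that the systematic term cancels and
\[
\hat F_t^n - \hat H_n F_t \;=\; \hat M^{-1/2}\hat P \bigl(e_t^n \;-\; \hat\Gamma_{eF}^n \hat\Gamma_F^{-1} F_t \bigr).
\]
Both residuals on the right are idiosyncratic objects on which the leading eigenprojector $\hat M^{-1/2}\hat P$ acts as a scaled cross-sectional average of order $n^{-1/2}$: for the $e_t^n$ part I would use $\|\hat p_j^n\|=1$ together with $\mu_1(\Gamma_e^n) \leq B_e$ from A\ref{A: r-SFS struct}(ii) and $\hat\mu_j \gtrsim n$ coming from Lemma 2(ii)-(iii) combined with Lemma 1(iii); for the projection residual I would invoke A\ref{A: Var est}(ii) and (iv) on $\hat\Gamma_{eF}^n$ and $\hat\Gamma_F^{-1}$, which for finite $T$ are $\mathcal O_p(1)$ in the relevant directions, and then apply the same eigenprojector bound.

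The hard part will be the coupling between the sample eigenprojector $\hat M^{-1/2}\hat P$ and the idiosyncratic term $e_t^n$: since $\hat P$ and $\hat M$ are computed from the entire sample (which contains $e_t^n$), one cannot simply condition and factor. My plan is to sidestep this by using the pointwise bound $\|\hat M^{-1/2}\hat P z\|^2 \leq \mu_1(\E z z')\,\tr(\hat M^{-1})$ whenever the idiosyncratic second-moment structure is deterministic, and to absorb the mild randomness of $\tr(\hat M^{-1})$ via Lemma 2(ii), which controls $\hat M/n$ in mean square uniformly away from zero. Since $T$ is fixed in (ii), all the $T^{-1/2}$ factors coming out of Lemma 2 collapse into constants, and only the cross-sectional $n^{-1/2}$ aggregation rate survives, yielding the stated $\mathcal O_{ms}(n^{-1/2})$ rate.
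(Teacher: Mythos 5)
Your proposal takes essentially the same route as the paper's proof: for (i) the identical triangle-inequality split through $F_t^{n,y}$, bounding $\norm{\hat F_t^n - F_t^{n,y}}$ via $(\hat{\mathcal K}_j^{F,n}-\mathcal K_j^{F,n})y_t^n$ with the $\mathcal O_{ms}(T^{-1/2}n^{-1/2})$ kernel rate and $\norm{y_t^n}=\mathcal O_{ms}(n^{1/2})$, and for (ii) the same substitution of $\hat F_t^n = \hat{\mathcal K}^{F,n}(\Lambda^n F_t + e_t^n)$ into the OLS formula for $\hat H_n$, which cancels the systematic term and leaves exactly the residual $\hat{\mathcal K}^{F,n}\bigl(e_t^n - \hat\Gamma_{eF}^n\hat\Gamma_F^{-1}F_t\bigr)$, bounded at rate $n^{-1/2}$ by the idiosyncratic-aggregation argument. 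If anything, you are more explicit than the paper about the dependence between the sample eigenprojector and $e_t^n$ (the paper passes over this silently), though your proposed pointwise bound mixing $\mu_1(\E z z')$ with $\tr(\hat M^{-1})$ would itself need a more careful justification.
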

\begin{proof}
    \begin{align*}
        \norm{\hat F_t^n - F_t^{\infty}} \leq \norm{\hat F_t^n - F_t^n} + \norm{F_t^n - F_t^{\infty}} = \mathcal O_{ms}(T^{-1/2}) + \mathcal O_{ms}(n^{-1/2})
    \end{align*}
    where for the first term we use Cauchy-Schwarz inequality
    \begin{align*}
       \norm{\hat F_{tj}^n - F_{j}^n} &= \norm{(\hat{\mathcal K}_j^{F, n}  - \mathcal K_j^{F,n}) y_t^n} \leq  \norm{\hat{\mathcal K}_j^{F, n}  - \mathcal K_j^{F,n}}\norm{y_t^n} \\
       &= \mathcal O_{ms}(T^{-1/2}n^{-1/2}) \mathcal O_{ms}(n^{1/2}) = \mathcal O_{ms}(T^{-1/2}).
    \end{align*}
    For statement (ii), note that 
    \begin{align*}
        \frac{1}{T} \sum_{t = 1}^T \hat F_t^n F_t' = \frac{1}{T} \sum_{t = 1}^T \hat{\mathcal K}^{F, n} \left(\Lambda^n F_t + e_t^n \right) F_t' = \hat{\mathcal K}^{F, n} \Lambda^n \hat \Gamma_F + \hat{\mathcal K}^{F, n} \hat \Gamma_F
    \end{align*}
    Consequently, 
    \begin{align*}
        \hat F_t^n - \hat H_n F_t &= \hat{\mathcal K}^{F, n} \Lambda^n F_t + \hat{\mathcal K}^{F, n} e_t^n - \hat{\mathcal K}^{F, n} \Lambda^n \hat \Gamma_F \hat \Gamma_F^{-1} F_t - \hat{\mathcal K}^{F, n} \hat \Gamma_{eF} \hat \Gamma_F^{-1} F_t \\
        \norm{\hat F_t^n - \hat H_n F_t} &\leq \norm{\hat{\mathcal K}^{F, n} e_t^n} + \norm{\hat{\mathcal K}^{F, n} \hat \Gamma_{eF}}\norm{\hat \Gamma_F^{-1}}\norm{F_t} = \mathcal O_{ms}(n^{-1/2}). 
    \end{align*}
\end{proof}
\section{Conclusion}
We refine the interpretation of what the principal component estimator is targeting by showing that the population normalised principal components actually have a limit in mean square. This is tied to the assumption that the eigenvalues of the common component variance matrix divided by $n$ are distinct which allows convergence of the eigenvectors. 
\bibliographystyle{apalike} 
\bibliography{references.bib}
\appendix
\renewcommand{\theequation}{\thesection.\arabic{equation}}

\end{document}